\documentclass[runningheads]{llncs}
\usepackage[T1]{fontenc}
\usepackage{graphicx,verbatim}
\usepackage{multirow}
\usepackage{booktabs}
\usepackage{amsfonts} 
\usepackage{gensymb}
\usepackage{float}
\usepackage{capt-of}
\usepackage{centernot}
\usepackage{mathtools}
\usepackage{stmaryrd}
\usepackage{pifont}
\usepackage{hyperref}
\usepackage{bm}
\usepackage{algorithm}
\usepackage{algpseudocode}
\usepackage{amssymb} 
\usepackage{paralist}
\usepackage{colortbl}
\definecolor{lightgray}{gray}{0.9}    

\usepackage{subcaption}
\begin{document}
\title{Efficient Flow Matching for Sparse-View CT Reconstruction}
%
\author{Jiayang Shi\inst{1,2} \and
Lincen Yang\inst{2} \and
Zhong Li\inst{2,3} \and
Tristan van Leeuwen\inst{1,4} \and
Dani\"{e}l M. Pelt\inst{2} \and
K. Joost Batenburg\inst{2}
}
\authorrunning{J. Shi et al.}
%
\institute{Centrum Wiskunde en Informatica \and
LIACS, Leiden University \and
Great Bay University \and
Mathematical Institute, Utrecht University\\
\email{l.yang@liacs.leidenuniv.nl}}


  
\maketitle              
\begin{abstract}
Generative models, particularly Diffusion Models (DM), have shown strong potential for Computed Tomography (CT) reconstruction serving as expressive priors for solving ill-posed inverse problems. However, diffusion-based reconstruction relies on Stochastic Differential Equations (SDEs) for forward diffusion and reverse denoising, where such stochasticity can interfere with repeated data consistency corrections in CT reconstruction. Since CT reconstruction is often time-critical in clinical and interventional scenarios, improving reconstruction efficiency is essential. In contrast, Flow Matching (FM) models sampling as a deterministic Ordinary Differential Equation (ODE), yielding smooth trajectories without stochastic noise injection. This deterministic formulation is naturally compatible with repeated data consistency operations. Furthermore, we observe that FM-predicted velocity fields exhibit strong correlations across adjacent steps. Motivated by this, we propose an FM-based CT reconstruction framework (FMCT) and an efficient variant (EFMCT) that reuses previously predicted velocity fields over consecutive steps to substantially reduce the number of Neural network Function Evaluations (NFEs), thereby improving inference efficiency. We provide theoretical analysis showing that the error introduced by velocity reuse is bounded when combined with data consistency operations. Extensive experiments demonstrate that FMCT/EFMCT achieve competitive reconstruction quality while significantly improving computational efficiency compared with diffusion-based methods. The codebase is open-sourced at \href{https://github.com/EFMCT/EFMCT}{https://github.com/EFMCT/EFMCT}.

\keywords{CT Reconstruction  \and Flow Matching \and Generative Models.}

\end{abstract}
\section{Introduction}
Generative models have recently gained popularity for Computed Tomography (CT) reconstruction due to their strong capability to learn expressive image priors \cite{ying2019x2ct,liu2023solving,liu2023dolce,chen2025cross}. Among them, Diffusion Models (DM) have achieved state-of-the-art (SOTA) performance in sparse-view CT reconstruction \cite{liu2023dolce,chen2025cross,shi2026dmct} by learning the distribution of full-view reconstructions and leveraging it as a powerful prior for solving ill-posed inverse problems. Despite their strong reconstruction quality, DM methods are inherently stochastic because they model the forward diffusion and reverse denoising processes as Stochastic Differential Equations (SDEs) \cite{song2021scorebased}. When combined with inverse problem solving, the repeated data consistency correction steps required for CT reconstruction can interfere with the stochastic evolution of the SDE \cite{zhang2025decoupling,chung2023diffusion}, leading to a \textit{push-and-pull effect} between prior-driven denoising and physics-based correction. This interaction often results in unstable or inconsistent reconstruction behavior \cite{shi2026dmct,chung2023diffusion}, and necessitates a large number of iterations and Neural network Function Evaluations (NFEs), typically on the order of thousands. Although deterministic schedulers such as DDIM \cite{song2021denoising} have been proposed to reduce the number of reverse steps, achieving high-quality reconstructions with substantially fewer iterations remains challenging due to the inherently stochastic formulation of diffusion models.

In clinical practice, CT reconstruction is frequently time-critical. In emergency and interventional settings, rapid image availability directly impacts diagnosis and treatment decisions, making time to first image a key performance metric \cite{fanucci2007whole,willemink2013iterative}. In addition, modern medical CT systems increasingly target higher spatial resolutions, which lead to larger reconstruction volumes and further exacerbate computational demands \cite{hussain2022modern}. While conventional Filtered Backprojection (FBP) provides fast reconstructions, its image quality degrades substantially under sparse-view or low-dose acquisition, where advanced prior-based methods become necessary. As a result, the excessive computational cost of diffusion-based reconstruction methods poses a significant challenge for their practical deployment in real-world clinical workflows \cite{shi2026dmct}.

Flow Matching (FM) offers a fundamentally different modeling perspective that is particularly well suited to this setting. Instead of stochastic sampling, FM models deterministic probability transport through an Ordinary Differential Equation (ODE) \cite{lipman2023flow}. Consequently, FM-based sampling follows a constant trajectory without stochastic noise injection. This deterministic formulation enables more coherent interaction with repeated data consistency corrections in inverse problems \cite{zhang2024flow,pourya2026flower}, thereby mitigating the \textit{push-and-pull effect} observed in diffusion-based methods and allowing for more efficient sampling. Furthermore, the velocity fields learned by FM are observed to evolve smoothly along the ODE trajectory, with strong correlations between adjacent time steps, as demonstrated in \cite{bajpai2026flowcast} and empirically confirmed in our experiments (see Fig.~\ref{fig:cosine_nfe}). This observation suggests that repeatedly re-evaluating the neural network at every integration step may be redundant.

Motivated by these properties, we propose an efficient FM-based CT reconstruction framework that reuses previously predicted velocity fields to substantially reduce the number of NFEs. The reuse of velocity fields introduces only a controlled integration error, which we show is of the same order as the Euler discretization error and remains bounded through the explicit data consistency corrections. As a result, the proposed strategy achieves significant efficiency gains while maintaining competitive reconstruction quality. Our main contributions are summarized as follows:
\begin{inparaenum}[1)]
\item We propose, to the best of our knowledge, the first flow-matching-based framework for CT reconstruction.
\item We introduce a velocity field reuse strategy that significantly improves the efficiency of FM-based CT reconstruction by reducing the required NFEs.
\item We provably show that the single-step reuse error is of the same order as Euler discretization, and does not alter the overall convergence behavior under bounded consecutive reuse steps.
\item We conduct extensive experiments to demonstrate the reconstruction performance and efficiency of the proposed FM-based CT reconstruction method.
\end{inparaenum}

\begin{figure}[t]
\centering
\includegraphics[width=\textwidth]{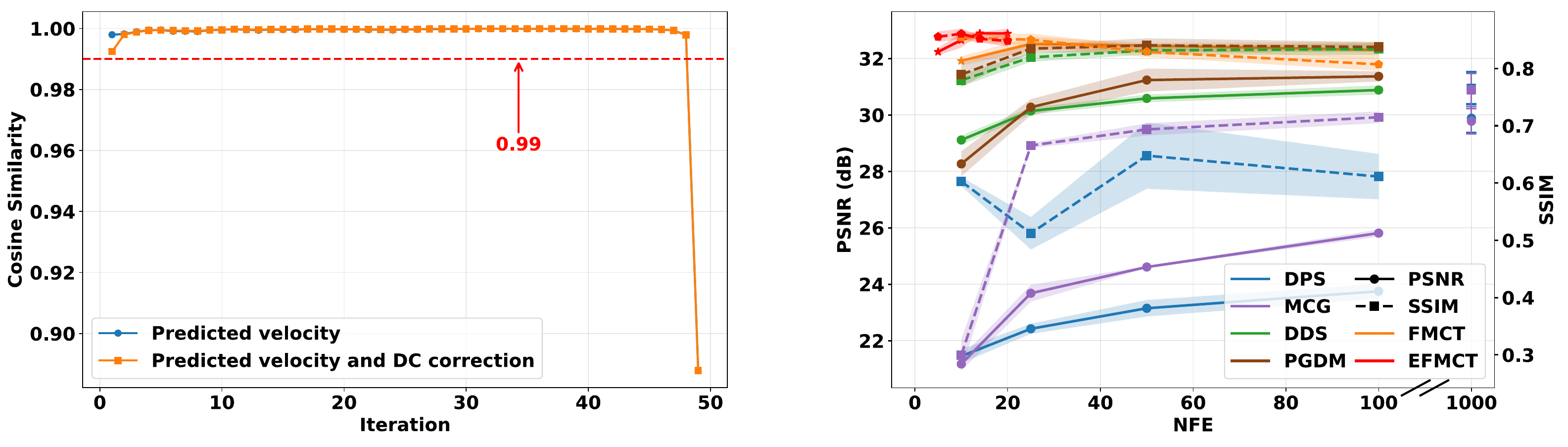}
\caption{\textbf{Left}: Cosine similarity of predicted velocity fields at consecutive iterations in FMCT, with and without data consistency correction, showing strong correlation across adjacent steps. \textbf{Right}: Reconstruction performance vs. NFE for DM methods and FMCT/EFMCT, demonstrating that FMCT/EFMCT achieve competitive quality at substantially lower NFEs. DM methods here use the deterministic DDIM sampler for fairness; DPS/MCG results with the \textit{original 1000-step DDPM sampler} are shown as \textit{isolated points}. Results are averaged over the same randomly selected 21 reconstructions from the AAPM dataset across all iteration settings; shaded regions indicate standard deviation.}
\label{fig:cosine_nfe}
\end{figure}

\section{Method}
\textbf{Problem Formulation.} CT reconstruction aims to recover an unknown object $\bm{x} \in \mathbb{R}^m$ from a set of projection measurements $\bm{y} \in \mathbb{R}^n$. The measurement process can be mathematically modeled as a linear system 
\begin{equation}
    \bm{y} = \bm{A}\bm{x},
\end{equation}
where $\bm{A} \in \mathbb{R}^{n \times m}$ is the system matrix determined by the acquisition geometry. In sparse-view CT, the number of measurements is insufficient ($n < m$), making the inverse problem underdetermined and ill-posed. In addition, measurement noise further degrades reconstruction quality. In this work, we focus on the sparse-view aspect of CT reconstruction.

\noindent \textbf{Flow Maching.} Given a source distribution $p(\bm{x}_1)$ and a target distribution $p(\bm{x}_0)$, FM defines a time-dependent velocity field $\bm{v}_t: \mathbb{R}^d\rightarrow\mathbb{R}^d$
that transports samples from $p(\bm{x}_1)$ to $p(\bm{x}_0)$ via the ODE \cite{lipman2023flow}
\begin{equation}
    \frac{d\bm{x}_t}{dt}
    =
    \bm{v}_t(\bm{x}_t),
    \quad t\in[0,1].
\end{equation}
The idea of FM is to train to approximate $\bm{v}_t$ by a neural network $\bm{v}_t^{\bm{\theta}}$, which allows sampling from the target distribution by numerically integrating the ODE. The optimization objective of FM then becomes
\begin{equation}
\label{eq:fm_object}
    \mathcal{L}_{FM}= \mathbb{E}_{t\sim\mathcal{U}(0,1), \: \bm{x}_t \sim p(\bm{x}_t)}\| \bm{v}_t(\bm{x}_t)-\bm{v}_t^{\bm{\theta}}(\bm{x}_t)\|^2.
\end{equation}
However, the marginal velocity field $\bm{v}_t(\bm{x}_t)$ is generally intractable, as it requires integration over all possible $\bm{x}_0$. The common way is to replace the marginal velocity with a conditional velocity field $\bm{v}_t(\bm{x}_t|\bm{x}_0)$, which is proven in  \cite{lipman2023flow} to yield gradients equivalent to those of Equation \ref{eq:fm_object}.

In this work, we focus on one specific conditional linear flow model, also known as Rectified Flow \cite{liu2023flow}. It models the transportation along straight-line trajectories through interpolation between $\bm{x}_0$ and $\bm{x}_1$
\begin{equation}
    \bm{x}_t = (1-t)\bm{x}_0+t\bm{x}_1,   \quad \bm{v}_t(\bm{x}_t|\bm{x}_0) = \bm{x}_1-\bm{x}_0.
\end{equation} This leads to the conditional flow matching loss
\begin{equation}
\label{eq:cfm_object}
    \mathcal{L}_{CFM}= \mathbb{E}_{t\sim\mathcal{U}(0,1), \: \bm{x}_0\sim p(\bm{x}_0), \: \bm{x}_1\sim p(\bm{x}_1)}\| \bm{v}^{\bm{\theta}}_t((1-t)\bm{x}_0+t\bm{x}_1)- (\bm{x}_1-\bm{x}_0)\|^2.
\end{equation}
In practice, we sample $\bm{x}_1$ from a standard Gaussian distribution and use full-view CT images as $\bm{x}_0$. Once trained, the model enables deterministic sampling by integrating the learned velocity field backward from $t=1$ to $t=0$.

\begin{figure}[t]
\centering
\includegraphics[width=1\textwidth]{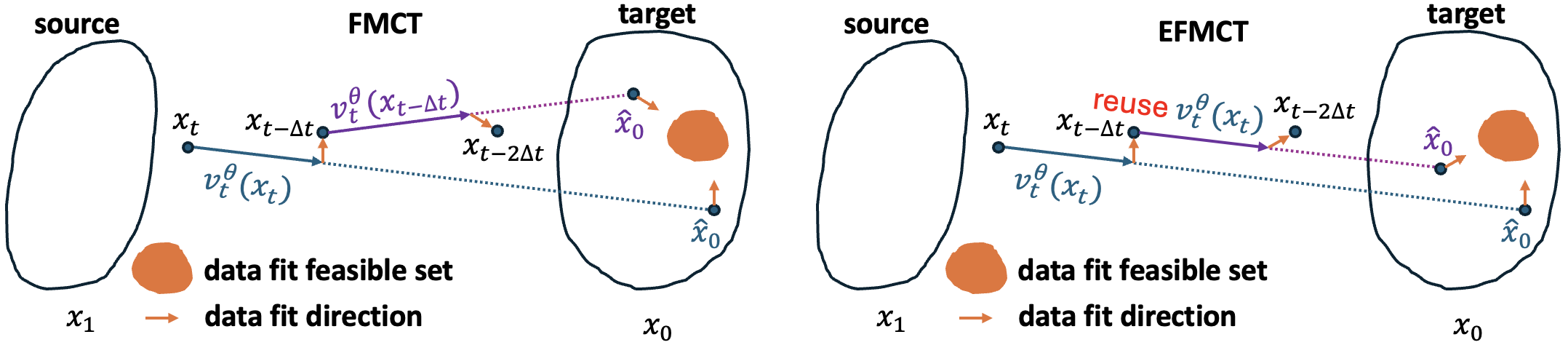}
\caption{Overview of our proposed FMCT(left) and EFMCT(right) method.}
\label{fig:method}
\end{figure}

\noindent \textbf{Flow Matching for CT Reconstruction.} Given a pretrained FM model, we incorporate physics-based data consistency into the sampling process to enable CT reconstruction. Each integration step consists of three components.

First, a flow transport step is performed using an explicit Euler update 
\begin{equation}
\label{eq:efmct_first}
\bm{x}_{t-\Delta t}'=\bm{x}_t-\Delta t\bm{v}_t^{\bm{\theta}}(\bm{x}_t). 
\end{equation}

Second, we estimate the terminal sample at $t=0$ via linear extrapolation, and apply the data consistency operation $\mathcal{DC}(\cdot)$ to obtain the correction direction $\bm{v}_{\mathcal{DC}}$ from measurement aware refinement 
\begin{equation}
\label{eq:efmct_second}
\hat{\bm{x}}_0=\bm{x}_t-t\bm{v}_t^{\theta}(\bm{x}_t), \quad \bm{v}_{\mathcal{DC}}=\mathcal{DC}(\hat{\bm{x}}_0,\bm{y},\bm{A}). 
\end{equation}
In this work, we implement $\mathcal{DC}(\cdot)$ using conjugate gradient \cite{fletcher1964function,chung2024decomposed}, though the framework is compatible with other data consistency schemes. 

Finally, the data-consistent update is applied as
\begin{equation}
\label{eq:efmct_third}
\bm{x}_{t-\Delta t}=\bm{x}_{t-\Delta t}'+\bm{v}_{\mathcal{DC}}.
\end{equation}
These steps are repeated iteratively until $t=0$ to obtain the final reconstruction. An overview of the proposed framework is illustrated in Fig.~\ref{fig:method}.

\noindent \textbf{Velocity Reuse for Efficient Sampling.} Empirically, we observe that the predicted velocities $\bm{v}_t^{\bm{\theta}}$ are highly correlated in consecutive steps (Fig. \ref{fig:cosine_nfe}), resulting in near-straight trajectories toward the reconstructed image. This behavior is consistent with the smooth evolution of velocity fields along the underlying ODE in flow matching. These observations suggest that repeatedly re-evaluating the neural network at every integration step may be redundant. Moreover, when combined with explicit data consistency operations, we argue that the small deviations introduced by reusing a previously predicted velocity can be effectively compensated by physics-based corrections, as conceptually illustrated in Fig.\ref{fig:method}.

Specifically, after computing $\bm{v}_t = \bm{v}_t^{\bm{\theta}}(\bm{x}_t)$ at time $t$, we reuse this velocity for up to $M$ consecutive steps:
\begin{equation}
\label{eq:velocity_reuse}
\bm{x}_{t-(j+1)\Delta t}'=\bm{x}_{t-j\Delta t}-\Delta t\bm{v}_t, \quad j \in[1,M].
\end{equation}
Since $\bm{v}_t$ is reused without additional network evaluations, this strategy directly reduces the number of NFEs. To ensure stability and data fidelity, we perform an adaptive refinement check at each reuse step
\begin{equation}
\|\bm{A}\bm{x}_{t-(j+1)\Delta t}'-\bm{y}\|^2 \leq \eta\|\bm{A}\bm{x}_{t-j\Delta t}'-\bm{y}\|^2,    
\end{equation}
where $\eta>1$ is a relaxation factor. If the condition is violated, velocity reuse is terminated and the velocity is recomputed using the neural network. This adaptive mechanism balances efficiency and reconstruction accuracy. $\eta$ is emprically set at $1.05$. The complete algorithm in summarized in the appendix.

\begin{proposition} \label{prop:reuse}
Assume that $\bm{v}$ is locally Lipschitz in both $\bm{x}$ and t, i.e.,  $\|\bm{v}_t(\bm{x})-\bm{v}_t(\bm{x}')\|  \le L_x \| \bm{x}-\bm{x}' \|$ and $\|\bm{v}_t(\bm{x})-\bm{v}_{t'}(\bm{x})\|  \le L_t | t-t' |$ for all relevant $\bm{x},\bm{x}',t,t'$.
Then, for a single reuse step, the local deviation between the reuse update $\tilde{\bm{x}}_{k+1}$ and the standard Euler update $\bm{x}_{k+1}$ satisfies $\|\tilde{\bm{x}}_{k+1}-\bm{x}_{k+1}\| = O(\Delta t^2)$, i.e., velocity reuse introduces a local error of the same order as the Euler discretization itself. Moreover, if the same velocity is reused for at most $M$ consecutive steps (independent of $\Delta t$),
and the data-consistency correction is non-expansive with respect to its image argument, then the deviation remains controlled. The accumulated deviation after one reuse block is $O(M^2\Delta t)$.
\end{proposition}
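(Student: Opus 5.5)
We compare the two one-step maps started from the same iterate $\bm{x}_k$ at time $t_k$. Suppose the reused velocity was computed $j\le M$ steps earlier, at $(\bm{x}_{k-j},t_{k-j})$. Then
\begin{equation*}
\tilde{\bm{x}}_{k+1}-\bm{x}_{k+1}=-\Delta t\bigl(\bm{v}_{t_{k-j}}(\bm{x}_{k-j})-\bm{v}_{t_k}(\bm{x}_k)\bigr),
\end{equation*}
before data consistency is applied. We split the velocity difference with the triangle inequality into a time part, bounded by $L_t|t_{k-j}-t_k|=L_t j\Delta t$, and a space part, bounded by $L_x\|\bm{x}_{k-j}-\bm{x}_k\|$. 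For the single-step claim ($j=1$), we bound $\|\bm{x}_k-\bm{x}_{k-1}\|\le \Delta t\,\|\bm{v}\|_\infty + \|\bm{v}_{\mathcal{DC}}\|$, using local boundedness of $\bm{v}$, which follows from local Lipschitz continuity on the bounded region visited by the iterates. This gives
\begin{equation*}
\|\tilde{\bm{x}}_{k+1}-\bm{x}_{k+1}\|\le \Delta t\,(L_t+L_x\|\bm{v}\|_\infty)\Delta t=O(\Delta t^2),
\end{equation*}
which is the same order as the local truncation error of Euler ($\tfrac12\Delta t^2\|\ddot{\bm{x}}\|$).

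The data-consistency increment is a problem here: it is not $O(\Delta t)$ a priori. I would handle it by the non-expansiveness assumption. We view the full step as $\mathcal{P}(\bm{x}'_{k+1})$, where $\mathcal{P}$ is the DC-corrected map, with $\mathcal{P}$ $1$-Lipschitz in its image argument (including through $\hat{\bm{x}}_0$). Then the post-correction deviation is at most the pre-correction deviation, so the DC step does not amplify errors. For the spatial drift $\|\bm{x}_{k-j}-\bm{x}_k\|$, I would either assume, or argue from the adaptive check, that the DC correction moves iterates by $O(\Delta t)$ along the trajectory. I expect this to be the main obstacle and would state it explicitly as a standing assumption, noting that the condition with $\eta$ bounds the residual growth between reuse steps.

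For the block bound, consider a reuse block of length $M$. At reuse step $j$, the same splitting gives a local deviation $\Delta t\,(L_t+L_xC)\,j\Delta t$, where $C$ bounds per-step motion over $\Delta t$. We then propagate the deviations. Each subsequent step adds at most a factor $(1+L_x\Delta t)$ through the Euler map, and a factor $\le 1$ through DC by non-expansiveness. By a discrete Grönwall argument over $M$ steps,
\begin{equation*}
e_M\le (1+L_x\Delta t)^M\sum_{j=1}^{M} (L_t+L_xC)\,j\Delta t^2\le e^{L_xM\Delta t}\,\tfrac{M(M+1)}{2}(L_t+L_xC)\Delta t^2 .
\end{equation*}
This is $O(M^2\Delta t^2)$ and in particular $O(M^2\Delta t)$, the stated bound. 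The stated bound is weaker, which gives slack to absorb the DC drift if it is only $O(1)$ per step rather than $O(\Delta t)$.

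Finally, since $M$ is independent of $\Delta t$, there are $O(1/(M\Delta t))$ blocks. Summing block errors with the same Grönwall factor $e^{L_x}$ shows the global effect remains of Euler's first order. This yields the convergence statement in the contribution list.
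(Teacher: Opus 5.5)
There is a genuine gap in how you handle the data-consistency step. The hypothesis only gives $\|\mathcal{DC}(\bm{z},\bm{y},\bm{A})-\mathcal{DC}(\bm{z}',\bm{y},\bm{A})\|\le\kappa\|\bm{z}-\bm{z}'\|$ for $\mathcal{DC}$ itself. In the update, however, the correction is evaluated at the extrapolated estimate $\hat{\bm{x}}_0=\bm{x}_k-t_k\bm{v}$ and then \emph{added} to $\bm{x}'_{k+1}$. Take two updates from the same $\bm{x}_k$ and set $\Delta\bm{v}=\bm{v}_{t_{k-1}}(\bm{x}_{k-1})-\bm{v}_{t_k}(\bm{x}_k)$. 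The pre-correction states differ by $\Delta t\,\Delta\bm{v}$, but the DC arguments differ by $t_k\,\Delta\bm{v}$, a factor $t_k/\Delta t$ larger. The DC outputs also add to the deviation rather than contracting it, since non-expansiveness of $\mathcal{DC}$ is not non-expansiveness of $\mathrm{id}+\mathcal{DC}$. So your assumption that the corrected map $\mathcal{P}$ is $1$-Lipschitz in $\bm{x}'_{k+1}$ ``including through $\hat{\bm{x}}_0$'' does not follow from the hypothesis. It also fails for the most natural choice. Exact projection onto $\{\bm{A}\bm{x}=\bm{y}\}$ gives $\mathcal{DC}(\bm{z},\bm{y},\bm{A})=-\bm{A}^{\dagger}(\bm{A}\bm{z}-\bm{y})$, which is non-expansive with $\kappa=1$, yet
\[
\tilde{\bm{x}}_{k+1}-\bm{x}_{k+1}=-\Delta t\,\Delta\bm{v}+t_k\,\bm{A}^{\dagger}\bm{A}\,\Delta\bm{v},
\]
which is generically of size $t_k\|\bm{A}^{\dagger}\bm{A}\,\Delta\bm{v}\|=\Theta(\Delta t)$, not $O(\Delta t^2)$. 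This is why the paper proves the $O(\Delta t^2)$ single-step claim for the Euler transport alone, using $\bm{x}_k-\bm{x}_{k-1}=-\Delta t\,\bm{v}_{t_{k-1}}(\bm{x}_{k-1})$. Your single-step bound also drops the $\|\bm{v}_{\mathcal{DC}}\|$ term from the drift between consecutive lines without comment. That is only legitimate under the $O(\Delta t)$-increment assumption you introduce later.

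The same mis-modelling breaks your block estimate. Write $\bm{\delta}_k=\tilde{\bm{x}}_k-\bm{x}_k$ for the reuse and standard trajectories, both DC-corrected, with $\bm{\delta}_{k_0}=\bm{0}$. The hypothesis then gives
\[
\|\bm{\delta}_{k+1}\|\le(1+\kappa)\|\bm{\delta}_k\|+(\Delta t+\kappa t_k)\,\|\bm{v}_{t_k}(\bm{x}_k)-\bm{v}_{t_{k_0}}(\bm{x}_{k_0})\|.
\]
So the propagation factor you can justify is $1+\kappa$, not $(1+L_x\Delta t)$ times a factor $\le1$. Using your Lipschitz splitting and $\|\bm{x}_{j+1}-\bm{x}_j\|\le\Delta t\,U_{\max}$, the source term is $(\Delta t+\kappa t_k)(k-k_0)\Delta t\,C$ with $C=L_t+L_xU_{\max}$. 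That is $O((k-k_0)\Delta t)$, not $O((k-k_0)\Delta t^2)$. Unrolling gives $\|\bm{\delta}_{k_0+M}\|\le(\Delta t+\kappa t_{\max})C\Delta t\,(1+\kappa)^{M-1}M(M-1)/2=O(M^2\Delta t)$ for bounded $M$, which is the paper's argument. The single power of $\Delta t$ in the statement is therefore not slack for absorbing DC drift. It is exactly the cost of the $t_k$-weighted velocity mismatch inside the DC argument, and your $O(M^2\Delta t^2)$ is not a valid strengthening.

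Keep your Lipschitz splitting and your explicit bounded-increment assumption, which both match the paper, and replace your propagation step with this recursion. Your last paragraph does not follow either. With $O(M^2\Delta t)$ per block over $O(1/(M\Delta t))$ blocks, the summed bound is $O(M)$ even before any $(1+\kappa)$ amplification across blocks. It therefore does not vanish as $\Delta t\to0$, let alone give first order; the proposition does not claim this anyway.
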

\noindent \textbf{Implication.}
The proposition implies that velocity reuse introduces a local error of \textit{the same order as the Euler discretization itself}. When the reuse length is bounded, the accumulated error remains controlled over consecutive reuse steps. \textbf{Proofs are in Appendix.}

\section{Results}
\textbf{Comparison Methods.} We compare the proposed method with several representative diffusion-based CT reconstruction approaches, including Diffusion Posterior Sampling (DPS) \cite{chung2023diffusion}, Manifold Constrained Gradient (MCG) \cite{chung2022improving}, Pseudoinverse-Guided Diffusion Models (PGDM) \cite{song2023pseudoinverseguided}, and Decomposed Diffusion Sampler (DDS) \cite{chung2024decomposed}. These methods represent a diverse set of strategies for incorporating measurement information into diffusion-based reconstruction and serve as strong SOTA baselines. In addition to learning-based methods, we include two classical reconstruction approaches, FBP and Model-Based Iterative Reconstruction (MBIR) method, Alternating Direction Method of Multipliers with Split-Bregman Total Variation (ADMM-TV) \cite{boyd2011distributed,goldstein2009split}. 

\begin{table}[t]
\setlength{\tabcolsep}{3pt}
\centering
\caption{Quantitative comparison of reconstruction performance across different methods. Mean and standard deviation are reported for PSNR, SSIM, and data fidelity. The best average result for each metric is shown in \textbf{bold}, and the second best is \underline{underlined}. NFE$^*$ counts only neural network forward evaluations. $\downarrow$ indicates efficiency improvement in NFE and computation time (on RTX4090) of FMCT/EFMCT compared with the most efficient diffusion-based method.}
\resizebox{\textwidth}{!}{
\begin{tabular}{c c ccc ccc cc}
\toprule[0.1pt]
\multirow{2}{*}{Dataset} 
& \multirow{2}{*}{Method}
& \multicolumn{3}{c}{40 views} 
& \multicolumn{3}{c}{20 views} 
& \multirow{2}{*}{NFE$^*$} 
& \multirow{2}{*}{Time/s} \\

\cmidrule(lr){3-5} \cmidrule(lr){6-8}
& 
& PSNR & SSIM & Data Fit 
& PSNR & SSIM & Data Fit 
&  &  \\

\midrule[0.05pt]

\multirow{8}{*}{\rotatebox[origin=c]{90}{\textbf{AAPM}}}
& FBP   
& $26.98 \pm 0.45$ & $0.691 \pm 0.027$ & $2783.64 \pm 233.30$
& $24.01 \pm 0.49$ & $0.547 \pm 0.029$ & $1798.77 \pm 171.00$
& - & 0.02 \\  

& ADMM-TV   
& $\textbf{31.94} \pm 0.42$ & $\textbf{0.836} \pm 0.027$ & $\underline{22.11} \pm 2.29$
& $29.45 \pm 0.46$ & $\textbf{0.798} \pm 0.029$ & $\underline{17.53} \pm 2.00$
& - & 9.85 \\  

& DPS   
& $30.05 \pm 0.69$ & $0.795 \pm 0.033$ & $261.35 \pm 20.58$
& $28.78 \pm 0.81$ & $0.764 \pm 0.032$ & $142.93 \pm 35.99$
& 1000 & 142.16 \\  

& MCG   
& $30.00 \pm 0.79$ & $0.791 \pm 0.032$ & $198.07 \pm 14.20$
& $27.57 \pm 0.42$ & $0.736 \pm 0.031$ & $180.83 \pm 11.87$
& 1000 & 143.44 \\

& PGDM  
& $30.26 \pm 0.61$ & $0.803 \pm 0.030$ & $208.98 \pm 13.72$
& $29.43 \pm 0.60$ & $0.775 \pm 0.031$ & $106.51 \pm 5.30$
& 100 & 29.79 \\

& DDS   
& $31.06 \pm 0.40$ & $\underline{0.834} \pm 0.030$ & $\textbf{19.09} \pm 1.14$
& $29.21\pm 0.36$ & $0.785 \pm 0.031$ & $\textbf{9.87} \pm 0.61$
& 50 & 3.83 \\

\cmidrule(lr){2-10}

& \cellcolor{lightgray} FMCT  
& \cellcolor{lightgray} $\underline{31.63} \pm 0.39$ & \cellcolor{lightgray} $0.810 \pm 0.031$ & \cellcolor{lightgray} $\underline{60.57} \pm 18.52$
& \cellcolor{lightgray} $\textbf{29.98} \pm 0.23$ & \cellcolor{lightgray} $0.769 \pm 0.030$ & \cellcolor{lightgray} $46.46 \pm 14.86$
& \cellcolor{lightgray} $\underline{25}(\downarrow50\%)$ & \cellcolor{lightgray}$\underline{1.92}(\downarrow50\%)$\\ 

& \cellcolor{lightgray} EFMCT 
& \cellcolor{lightgray} $31.52 \pm 0.28$ & \cellcolor{lightgray} $0.832 \pm 0.027$ & \cellcolor{lightgray} $182.80 \pm 104.39$
& \cellcolor{lightgray} $\underline{29.52} \pm 0.23$ & \cellcolor{lightgray} $\underline{0.795} \pm 0.027$ & \cellcolor{lightgray} $111.41 \pm 79.80$
& \cellcolor{lightgray} $\textbf{7}(\downarrow75\%)$ & \cellcolor{lightgray} $\textbf{0.83} (\downarrow78\%)$ \\  

\midrule[0.1pt]
\multirow{8}{*}{\rotatebox[origin=c]{90}{\textbf{Decathlon}}}
& FBP   
& $28.01 \pm 0.76$ & $0.763 \pm 0.037$ & $2772.79 \pm 434.36$
& $24.85 \pm 0.96$ & $0.618 \pm 0.039$ & $1731.97 \pm 305.77$
& - & 0.02 \\  

& ADMM-TV   
& $34.40 \pm 1.23$ & $0.900 \pm 0.021$ & $\underline{15.73} \pm 0.55$
& $30.26 \pm 1.03$ & $0.851 \pm 0.028$ & $\underline{14.76} \pm 0.50$
& - & 9.85 \\  

& DPS   
& $33.85 \pm 1.25$ & $0.870 \pm 0.018$ & $250.11 \pm 18.61$
& $31.69 \pm 1.44$ & $0.842 \pm 0.025$ & $121.25 \pm 18.57$
& 1000 & 142.16 \\  

& MCG   
& $30.03 \pm 1.20$ & $0.824 \pm 0.030$ & $301.17\pm 35.25$
& $28.67 \pm 1.26$ & $0.798 \pm 0.038$ & $198.22 \pm 22.68$
& 1000 & 143.44 \\

& PGDM  
& $34.78 \pm 1.23$ & $0.899 \pm 0.014$ & $147.10 \pm 15.20$
& $31.82 \pm 1.19$ & $0.860 \pm 0.018$ & $132.96 \pm 13.69$
& 100 & 29.79 \\

& DDS   
& $\textbf{37.66} \pm 1.20$ & $\underline{0.923} \pm 0.009$ & $\textbf{15.10} \pm 9.29$
& $33.00\pm 1.29$ & $\underline{0.873} \pm 0.018$ & $\textbf{5.01} \pm 0.36$
& 100 & 7.65 \\

\cmidrule(lr){2-10}

& \cellcolor{lightgray} FMCT  
& \cellcolor{lightgray} $\underline{37.65} \pm 1.53$ & \cellcolor{lightgray} $0.912 \pm 0.020$ & \cellcolor{lightgray} $334.44 \pm 74.21$
& \cellcolor{lightgray} $\textbf{33.41} \pm 1.57$ & \cellcolor{lightgray} $0.841 \pm 0.036$ & \cellcolor{lightgray} $294.97 \pm 58.87$
& \cellcolor{lightgray} $\underline{50}(\downarrow50\%)$ & \cellcolor{lightgray} $\underline{4.09} (\downarrow47\%)$ \\ 

& \cellcolor{lightgray} EFMCT 
& \cellcolor{lightgray} $37.56 \pm 1.37$ & \cellcolor{lightgray} $\textbf{0.940} \pm 0.014$ & \cellcolor{lightgray} $621.99 \pm 48.99$
& \cellcolor{lightgray} $\underline{33.21} \pm 1.29$ & \cellcolor{lightgray} $\textbf{0.888} \pm 0.024$ & \cellcolor{lightgray} $470.58 \pm 33.62$
& \cellcolor{lightgray} $\textbf{11}(\downarrow89\%)$ & \cellcolor{lightgray} $\textbf{1.72} (\downarrow78\%)$ \\  
\bottomrule[0.1pt]
\end{tabular}
}
\label{tab:result}
\end{table}

\begin{figure}[t]
\centering
\includegraphics[width=1\textwidth]{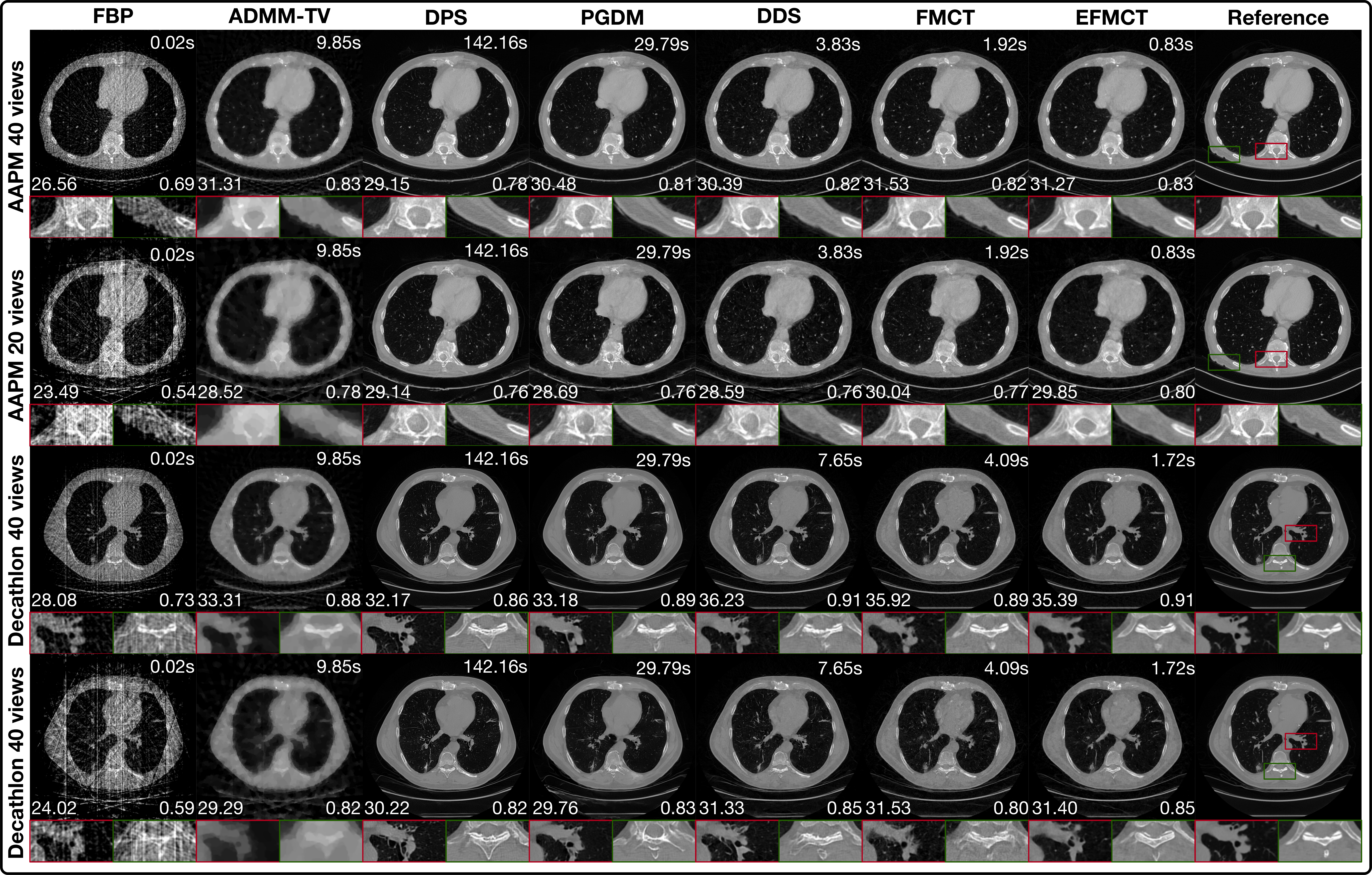}
\caption{Visual comparison of reconstructions across different methods, datasets, and views. PSNR, SSIM, and reconstruction time are shown in the lower-left, lower-right, and upper-right corners of each image, respectively. We visualize only the best three diffusion-based methods for better visibility. Zoom in for more details.}
\label{fig:result}
\end{figure}

\noindent \textbf{Datasets.} We conduct experiments on two publicly available CT datasets to evaluate the reconstruction performance of the proposed method. The first dataset is the AAPM 2016 Low Dose CT Grand Challenge dataset \cite{mccollough2017low}. From this dataset, CT images ($512\times512$ pixels) from nine patients are used to train the diffusion and flow matching models, while images from patient \textit{L506} are reserved for testing. The second dataset is the Medical Segmentation Decathlon CT dataset \cite{antonelli2022medical}. From the \textit{Task\_06 Lung} subset ($512\times512$ pixels), we randomly select ten patients for training and evaluate reconstruction performance on images from patient \textit{017}. To obtain sparse-view measurements, we simulate 20-view and 40-view projection data using the ASTRA Toolbox \cite{van2016fast}. All experiments are conducted using a parallel-beam geometry. This choice is made deliberately to provide a controlled and widely used experimental setting that minimizes confounding effects introduced by geometric complexity and allows us to isolate the impact of the reconstruction methodology itself. The proposed FMCT and EFMCT with velocity reuse strategy do not rely on geometry-specific assumptions and can be readily extended to other acquisition geometries by modifying the forward operator.

\noindent \textbf{Reconstruction Performance.} Table~\ref{tab:result} reports the quantitative reconstruction results. FMCT/EFMCT are often among the highest PSNR and SSIM across most datasets and view configurations, while requiring substantially fewer NFEs and lower computation time than diffusion-based baselines. ADMM-TV and DDS achieve the best and second-best data fidelity, reflecting their strong emphasis on measurement consistency. Compared with FMCT, EFMCT exhibits a slight reduction in PSNR and data fidelity but attains comparable or improved SSIM, accompanied by a significant efficiency improvement. This demonstrates that the proposed velocity reuse strategy effectively reduces computational cost while preserving reconstruction quality.

Figure~\ref{fig:result} provides a visual comparison. Although ADMM-TV achieves high PSNR/SSIM, its reconstructions are overly smooth and lack fine details. Diffusion-based methods and FMCT/EFMCT produce visually sharper and more detailed reconstructions, though they do not always exactly match the reference. This behavior is expected in sparse-view CT, where the problem is severely ill-posed and multiple reconstructions can satisfy the measurements. Generative methods balance data consistency with adherence to the learned prior, which may lead to small deviations in fine-scale details.

\begin{figure}[t]
\centering
\includegraphics[width=1\textwidth]{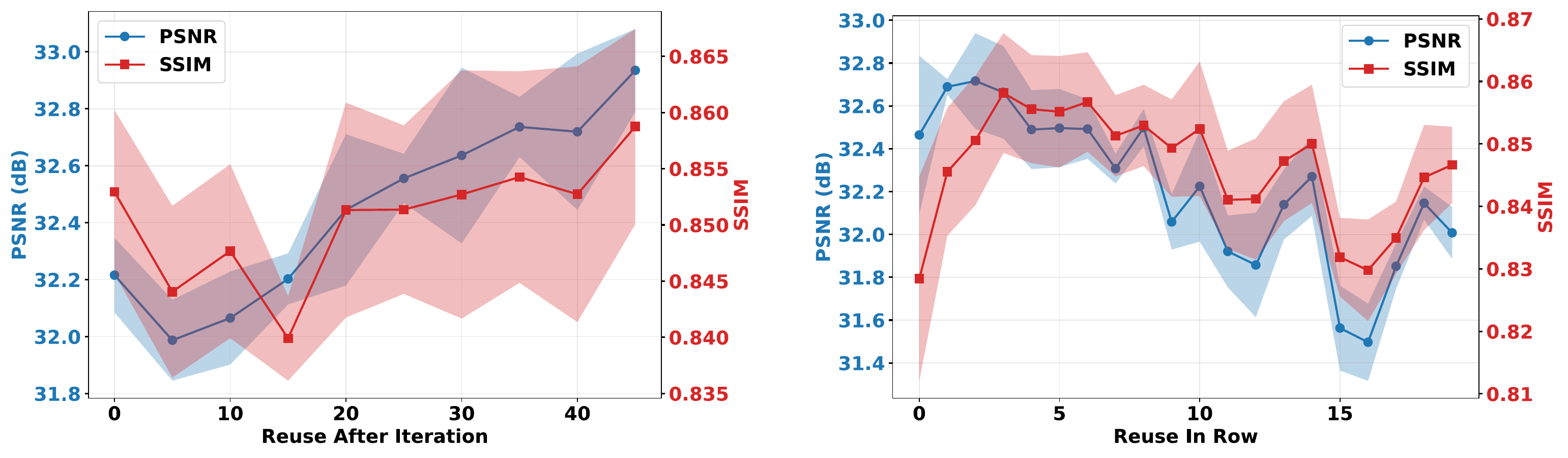}
\caption{Ablation of the velocity reuse strategy. Left: Reuse enabled after different iteration indices.
Right: Varying the maximum number of consecutive reuse steps. Shaded regions indicate standard deviation.}
\label{fig:ablation}
\end{figure}

\noindent \textbf{Impact of Velocity Reuse Strategy.} We analyze the effect of the proposed velocity reuse strategy through an ablation study shown in Fig.~\ref{fig:ablation}. The experiments are conducted on 21 randomly selected test CT images, with the total number of sampling iterations fixed to 50. In the first ablation, we vary the iteration at which velocity reuse is enabled while fixing the maximum number of consecutive reuse steps to 10. Enabling reuse too early leads to an initial drop in reconstruction performance, whereas delaying reuse to later iterations gradually improves PSNR and SSIM. This behavior suggests that early-stage sampling is more sensitive to approximation errors, while reuse becomes safer once the reconstruction has approached a more stable trajectory. Consequently, velocity reuse is better suited to mid- or late-stage sampling. In the second ablation, velocity reuse is enabled from the first iteration, and the maximum number of allowed consecutive reuse steps is varied. Increasing the reuse limit initially improves PSNR and SSIM, indicating that moderate reuse does not adversely affect reconstruction quality. However, allowing excessive consecutive reuse results in a general performance degradation, revealing a trade-off between reconstruction quality and computational efficiency. Aggressive reuse substantially reduces computational cost but may compromise reconstruction accuracy. For all EFMCT results, we adopt a reuse strategy that enables velocity reuse after the first iteration and allows up to 10 consecutive reuse steps in all experiments.


\section{Conclusion}
In this work, we proposed FMCT/EFMCT, representing the first application of flow matching to CT reconstruction. By exploiting the deterministic transport structure of flow matching, we introduced a velocity reuse strategy that substantially reduces NFEs. We provided theoretical analysis showing that the error introduced by velocity reuse is of the same order as Euler discretization and remains bounded when combined with explicit data consistency operations. Extensive experiments on multiple datasets demonstrate that FMCT/EFMCT achieve competitive reconstruction quality while significantly improving computational efficiency compared with diffusion-based methods. We believe that the proposed framework offers a practical and principled approach toward efficient generative CT reconstruction, with the potential to facilitate the deployment of advanced reconstruction methodologies in real-world clinical settings.\\

\noindent\textbf{Ackownledgements.} The authors are supported by European Union H2020-MSCA-ITN-2020 under grant agreement no. 956172 and Dutch Research Council under grant no. ENWSS.2018.003, NWA.1160.18.316.

\noindent\textbf{Disclosure of Interests.} The authors have no competing interests to declare that are relevant to the content of this article.
%
%
%
\bibliographystyle{splncs04}
\bibliography{mybibliography}

\newpage
\section{Appendix: Proof of Proposition~\ref{prop:reuse}}

\begin{proof}
Let $\bm{v}$ $\mathbb{R}^d\to\mathbb{R}^d$ be a velocity field. Define the explicit noise-to-image Euler update as
\begin{equation}
    \bm{x}_{k+1} = \bm{x}_k - \Delta t\, \bm{v}_{t_k}(\bm{x}_k),
\quad t_{k+1}=t_k-\Delta t.
\end{equation}
Define the velocity-reuse update for $M$ consecutive steps using a previously evaluated velocity computed at $(t_{k_0},\bm{x}_{k_0})$:
\begin{equation}
    \tilde{\bm{x}}_{k+1} = \bm{x}_k - \Delta t\, \bm{v}_{t_{k_0}}(\bm{x}_{k_0}),
\quad k=k_0,...,k_0+M-1.
\end{equation}

Assume Lipschitz continuity $L_x,L_t\ge0$ such that for all relevant $\bm{x},\bm{x}',t,t'$,
\begin{equation}
\|\bm{v}_t(\bm{x})-\bm{v}_t(\bm{x}')\|  \le L_x \| \bm{x}-\bm{x}' \|, \quad \|\bm{v}_t(\bm{x})-\bm{v}_{t'}(\bm{x})\|  \le L_t | t-t' |.
\end{equation}

\textbf{Single Step Error.} Assume reuse the velocity at the first step $k_0=k-1$ gives error $e_k$, 
\begin{equation}
\begin{aligned}
e_k
&= \|\bm{v}_{t_k}(\bm{x}_k)-\bm{v}_{t_{k-1}}(\bm{x}_{k-1})\| \\
&\le \|\bm{v}_{t_k}(\bm{x}_k)-\bm{v}_{t_{k-1}}(\bm{x}_{k})\|
   + \|\bm{v}_{t_{k-1}}(\bm{x}_{k})-\bm{v}_{t_{k-1}}(\bm{x}_{k-1})\| \\
&\le L_t |t_k - t_{k-1}| + L_x\|\bm{x}_{k}-\bm{x}_{k-1}\| \\
&\le L_t \Delta t+ L_x\Delta t \|\bm{v}_{t_{k-1}}(\bm{x}_{k-1})\|. 
\end{aligned}
\end{equation}
Then the error in $\tilde{\bm{x}}_{k+1}$ and $\bm{x}_{k+1}$ becomes
\begin{equation}
\begin{aligned}
\|\bm{x}_{k+1}-\tilde{\bm{x}}_{k+1}\|
& = \Delta t \|\bm{v}_{t_k}(\bm{x}_k)-\bm{v}_{t_{k-1}}(\bm{x}_{k-1})\| \\
& \le \Delta t^2  (L_t + L_x \|\bm{v}_{t_{k-1}}(\bm{x}_{k-1})\|). 
\end{aligned}
\end{equation}
It shows that the reusing the velocity gives single step error of $O(\Delta t^2)$, which matches the error of standard Euler $x(t-\Delta t)=x(t)-\Delta t\,\dot{x}+O(\Delta t^2)$.

\textbf{Accumulated Error.}
According to our proposed algorithm, we define standard single-step state update as

\begin{equation}
\bm{x}_{k+1}=\bm{x}_k-\Delta t\,\bm{v}_{t_k}(\bm{x}_k)+\mathcal{DC}\!\left(\bm{x}_k-t_k\bm{v}_{t_k}(\bm{x}_k), \bm{y}, \bm{A}\right),
\end{equation}

and the corresponding state update with velocity-reuse steps as

\begin{equation}
\tilde{\bm{x}}_{k+1} = \tilde{\bm{x}}_k-\Delta t\,\bm{v}_{t_{k_0}}(\bm{x}_{k_0})+\mathcal{DC}\!\left(\tilde{\bm{x}}_k-t_k\bm{v}_{t_{k_0}}(\bm{x}_{k_0}), \bm{y}, \bm{A}\right), \quad k=k_0,\ldots,k_0+M-1 ,
\end{equation}
where $\mathcal{DC}(\bm{x}, \bm{y}, \bm{A})$ is the data consistency operation given measurement $\bm{y}$ and forward operator $\bm{A}$. 

Here the same velocity $\bm{v}_{t_{k_0}}(\bm{x}_{k_0})$ is reused for $M$ consecutive steps. Assume $\mathcal{DC}(\bm{x}, \bm{y}, \bm{A})$ is $\kappa$-non-expansive with respect to its image argument,

\begin{equation}
\|\mathcal{DC}(\bm{x}, \bm{y}, \bm{A})-\mathcal{DC}(\bm{x}', \bm{y}, \bm{A})\| \le \kappa\|\bm{x}-\bm{x}'\|, \quad 0\le\kappa\le1.
\end{equation}

Let $\bm{\delta}_k=\tilde{\bm{x}}_k-\bm{x}_k$. Then
\begin{equation}
\begin{aligned}
\|\bm{\delta}_{k+1}\| &\le \|\bm{\delta}_k\| + \Delta t \|\bm{v}_{t_k}(\bm{x}_k)-\bm{v}_{t_{k_0}}(\bm{x}_{k_0})\| \\
&\quad+ \kappa \left\| \bm{\delta}_k - t_k \left(\bm{v}_{t_{k_0}}(\bm{x}_{k_0})-\bm{v}_{t_k}(\bm{x}_k)\right)\right\| \\
&\le(1+\kappa)\|\bm{\delta}_k\| + (\Delta t+\kappa t_k) \underbrace{\|\bm{v}_{t_k}(\bm{x}_k)-\bm{v}_{t_{k_0}}(\bm{x}_{k_0})\|}_{\text{velocity mismatch}}.
\end{aligned}
\end{equation}

Using Lipschitz continuity, we have

\begin{equation}
\begin{aligned}
\underbrace{\|\bm{v}_{t_k}(\bm{x}_k)-\bm{v}_{t_{k_0}}(\bm{x}_{k_0})\|}_{\text{velocity mismatch}}
&\le L_x \underbrace{ \|\bm{x}_k-\bm{x}_{k_0}\|}_{\text{state drift}}+L_t|t_k-t_{k_0}|.
\end{aligned}
\end{equation}

Assume the practical updates have bounded increments,

\begin{equation}
\|\bm{x}_{j+1}-\bm{x}_j\|
\le \Delta t\,U_{\max},
\end{equation}

which holds when both the velocity field and the DC correction remain bounded along the trajectory. Then

\begin{equation}
\begin{aligned}
\underbrace{\|\bm{x}_k-\bm{x}_{k_0}\|}_{\text{state drift}}
&\le \sum_{j=k_0}^{k-1}\|\bm{x}_{j+1}-\bm{x}_j\| \\
&\le (k-k_0)\Delta t\,U_{\max}.
\end{aligned}
\end{equation}

Since $|t_k-t_{k_0}|=(k-k_0)\Delta t$, we obtain

\begin{equation}
\underbrace{\|\bm{v}_{t_k}(\bm{x}_k)-\bm{v}_{t_{k_0}}(\bm{x}_{k_0})\|}_{\text{velocity mismatch}} \le (k-k_0)\Delta t\,(L_t+L_xU_{\max}).
\end{equation}

Let $C=L_t+L_xV_{\max}$ and $t_{\max}=\max_k t_k$. Then

\begin{equation}
\|\bm{\delta}_{k+1}\| \le (1+\kappa)\|\bm{\delta}_k\| + (\Delta t+\kappa t_{\max})(k-k_0)\Delta t\,C .
\end{equation}

Unrolling over $M$ consecutive reuse steps with $\bm{\delta}_{k_0}=0$ gives

\begin{equation}
\begin{aligned}
\underbrace{\|\bm{\delta}_{k_0+M}\|}_{\text{accumlated deviation}} &\le (\Delta t+\kappa t_{\max})C\Delta t \sum_{j=0}^{M-1}(1+\kappa)^{M-1-j}j  \\
&\le (\Delta t+\kappa t_{\max})C\Delta t (1+\kappa)^{M-1}\frac{M(M-1)}{2}.
\end{aligned}
\end{equation}

Therefore for bounded $M$,

\begin{equation}
\|\bm{\delta}_{k_0+M}\| = O(M^2\Delta t),
\end{equation}
i.e. the accumulated error remains controlled for bounded consecutive reuse steps.

\end{proof}

\newpage
\section{Appendix: Algorithm}
We summarize the complete EFMCT algorithm in \ref{alg:efmct}. 
\begin{algorithm}[h]
\caption{EFMCT: Flow Matching with Velocity Reuse}
\label{alg:efmct}
\begin{algorithmic}[1]
\Require Forward operator $\bm{A}$, measurements $\bm{y}$, FM model $\bm{v}^{\bm{\theta}}_t(\cdot)$,
number of steps $N$, max reuse steps $M$, relaxation factor $\eta$
\State Initialize $t \gets 1$, $\bm{x}_t \sim \mathcal{N}(\bm{0},\bm{I})$, $\Delta t \gets 1/N$ \Comment{Start from Gaussian}
\State $\bm{v}_{\mathrm{prev}} \gets \varnothing$, $r \gets \|\bm{A}\bm{x}_t - \bm{y}\|^2$, $m \gets 0$ \Comment{Residual $r$ and reuse counter $m$}
\For{$i = 1$ to $N$}
    \Statex \textit{// Attempt velocity reuse if it improves data consistency}
    \If{$\bm{v}_{\mathrm{prev}} \neq \varnothing$ \textbf{and} $m < M$}
        \State $\bm{x}_{t-\Delta t}' \gets \bm{x}_t - \Delta t\,\bm{v}_{\mathrm{prev}}$, \quad $\hat{\bm{x}}_0\ \gets \bm{x}_{t-\Delta t}'-t\bm{v}_{\mathrm{prev}}$, \quad $\tilde{r} \gets \|\bm{A}\bm{x}_{t-\Delta t}' - \bm{y}\|^2$ \Comment{Reuse velocity}
        \If{$\tilde{r} \le \eta\, r$}
            \State $\bm{x}_{t-\Delta t} \gets \bm{x}_{t-\Delta t}' + \bm{v}_{\mathcal{DC}}(\hat{\bm{x}}_0,\bm{y},\bm{A})$, \quad $r \gets \tilde{r}$, \quad $m \gets m+1$ \Comment{DC update}
        \Else
            \State $\bm{v}_{\mathrm{prev}} \gets \varnothing$,\quad $m \gets 0$
            \State \textbf{continue}
        \EndIf
    \Else
    \Statex \textit{// Evaluate FM network at current time step}
        \State $\bm{x}_{t-\Delta t}' \gets \bm{x}_t - \Delta t\,\bm{v}^{\bm{\theta}}_t(\bm{x}_t)$, \quad $\hat{\bm{x}}_0 \gets \bm{x}_{t-\Delta t}'-t\bm{v}^{\bm{\theta}}_t(\bm{x}_t)$
        \State $\bm{x}_{t-\Delta t} \gets \bm{x}_{t-\Delta t}' + \bm{v}_{\mathcal{DC}}(\hat{\bm{x}}_0,\bm{y},\bm{A})$, \quad $r \gets \|\bm{A}\bm{x}_{t-\Delta t} - \bm{y}\|^2$
        \State $\bm{v}_{\mathrm{prev}} \gets \bm{v}^{\bm{\theta}}_t(\bm{x}_t)$,\quad $m \gets 0$
    \EndIf
    \State $t \gets t-\Delta t$, \quad $\bm{x}_t \gets \bm{x}_{t-\Delta t}$
\EndFor
\State \Return $\bm{x}_0$
\end{algorithmic}
\end{algorithm}

\section{Appendix: Extra Results}

\begin{table}[t]
\setlength{\tabcolsep}{4pt}
\centering
\caption{Quantitative comparison of reconstruction perceptual quality across different methods. Mean and standard deviation are reported for LPIPS, where lower values indicate better perceptual similarity. The best average result for each setting is shown in \textbf{bold}, and the second best is \underline{underlined}. NFE$^*$ counts only neural network forward evaluations. $\downarrow$ indicates efficiency improvement in NFE and computation time (on RTX4090) of FMCT/EFMCT compared with the most efficient diffusion-based method.}
\begin{tabular}{c c c c c c}
\toprule[0.1pt]
\multirow{2}{*}{Dataset} 
& \multirow{2}{*}{Method}
& 40 views 
& 20 views 
& \multirow{2}{*}{NFE$^*$} 
& \multirow{2}{*}{Time/s} \\

\cmidrule(lr){3-3} \cmidrule(lr){4-4}
& 
& LPIPS $\downarrow$
& LPIPS $\downarrow$
&  &  \\

\midrule[0.05pt]

\multirow{8}{*}{\rotatebox[origin=c]{90}{\textbf{AAPM}}}
& FBP   
& $0.391 \pm 0.014$
& $0.554 \pm 0.021$
& - & 0.02 \\  

& ADMM-TV  
& $0.272 \pm 0.027$
& $0.380 \pm 0.028$

& - & 9.85 \\  

& DPS   
& $0.128 \pm 0.018$
& $\underline{0.166}\pm 0.025$
& 1000 & 142.16 \\  

& MCG   
& $0.145 \pm 0.017$
& $0.208\pm 0.028$
& 1000 & 143.44 \\

& PGDM  
& $0.159 \pm 0.022$
& $0.187 \pm 0.025$
& 100 & 29.79 \\

& DDS   
& $\underline{0.114}\pm 0.013$
& $0.181 \pm 0.021$
& 50 & 3.83 \\

\cmidrule(lr){2-6}

& \cellcolor{lightgray} FMCT  
& \cellcolor{lightgray} $\textbf{0.101} \pm 0.010$
& \cellcolor{lightgray} $\textbf{0.148} \pm 0.015$

& \cellcolor{lightgray} $\underline{25}(\downarrow50\%)$
& \cellcolor{lightgray} $\underline{1.92}(\downarrow50\%)$ \\ 

& \cellcolor{lightgray} EFMCT 
& \cellcolor{lightgray} $0.129 \pm 0.014$
& \cellcolor{lightgray} $0.194 \pm 0.021$
& \cellcolor{lightgray} $\textbf{7}(\downarrow75\%)$
& \cellcolor{lightgray} $\textbf{0.83}(\downarrow78\%)$ \\  

\midrule[0.1pt]

\multirow{8}{*}{\rotatebox[origin=c]{90}{\textbf{Decathlon}}}
& FBP   
& $0.348 \pm 0.018$
& $0.512\pm 0.026$
& - & 0.02 \\  

& ADMM-TV   
& $0.195 \pm 0.025$
& $0.319 \pm 0.021$
& - & 9.85 \\  

& DPS   
& $0.128 \pm 0.037$
& $0.140 \pm 0.038$
& 1000 & 142.16 \\  

& MCG   
& $0.164 \pm 0.034$
& $0.186 \pm 0.026$
& 1000 & 143.44 \\

& PGDM  
& $0.100 \pm 0.129$
& $0.124 \pm 0.032$
& 100 & 29.79 \\

& DDS   
& $\underline{0.072} \pm 0.015$
& $\textbf{0.110} \pm 0.017$
& 100 & 7.65 \\

\cmidrule(lr){2-6}

& \cellcolor{lightgray} FMCT  
& \cellcolor{lightgray} $0.109 \pm 0.021$
& \cellcolor{lightgray} $0.180 \pm 0.026$
& \cellcolor{lightgray} $\underline{50}(\downarrow50\%)$
& \cellcolor{lightgray} $\underline{4.09}(\downarrow47\%)$ \\ 

& \cellcolor{lightgray} EFMCT 
& \cellcolor{lightgray} $\textbf{0.070} \pm 0.011$
& \cellcolor{lightgray} $\underline{0.120} \pm 0.024$
& \cellcolor{lightgray} $\textbf{11}(\downarrow89\%)$
& \cellcolor{lightgray} $\textbf{1.72}(\downarrow78\%)$ \\  

\bottomrule[0.1pt]
\end{tabular}
\label{tab:lpips_result}
\end{table}

\end{document}